\documentclass[a4paper,10pt]{article}

\usepackage{lcbmaths}
\usepackage{graphicx}
\usepackage{lmodern}
\usepackage[T1]{fontenc}
\usepackage{microtype}      
\usepackage{enumitem}
\usepackage{wrapfig}
\usepackage[colorlinks,citecolor=blue,pdftex,hypertexnames=false]{hyperref}
\usepackage[top=1.5cm,bottom=2.5cm,left=2cm,right=1.9cm]{geometry}
\usepackage{natbib}
\newcommand\exop      {\mathbb{E}}
\newcommand\enop      {\mathbf{H}}
\newcommand\trop      {\mathrm{\scalebox{0.65}{\textsf{T}}}}
\newcommand\rrop      {\mathrm{\scalebox{0.65}{\textsf{R}}}}
\newcommand\rropb     {\mathrm{\scalebox{0.75}{\textsf{R}}}}
\newcommand\eps       {\varepsilon}
\renewcommand\phi     {\varphi}
\newcommand\dt        {\Delta}
\newcommand\beps      {\boldsymbol\eps}
\newcommand\bbeta     {\boldsymbol\eta}

\newcommand\obeps     {\bar\beps}
\newcommand\bx        {\boldsymbol{x}}
\newcommand\by        {\boldsymbol{y}}
\newcommand\bz        {\boldsymbol{z}}

\newcommand\bv        {\boldsymbol{v}}
\newcommand\bw        {\boldsymbol{w}}
\newcommand\be        {\boldsymbol{e}}
\newcommand\bY        {\boldsymbol{Y}}
\newcommand\bA        {\boldsymbol{A}}
\newcommand\bB        {\boldsymbol{B}}
\newcommand\bE        {\boldsymbol{E}}

\newcommand\oby       {\bar\by}
\newcommand\obz       {\bar\bz}

\newcommand\oA        {\bar{A}}
\newcommand\oB        {\bar{B}}

\newcommand\oK        {\bar{K}}

\newcommand\oP        {\bar{P}}

\newcommand\oSig      {\bar\Sigma}

\newcommand\cR        {\mathcal{R}}
\newcommand\cE        {\mathcal{E}}
\newcommand\cF        {\mathcal{F}}
\newcommand\cT        {\mathcal{T}}
\newcommand\cG        {\mathcal{G}}

\newcommand\cN        {\mathcal{N}}

\newcommand\vbar      {{\,|\,}}

\newcommand\Kay       {\scalebox{1.5}{$\kappa$}} 
\newcommand\oKay      {\bar\Kay}

\newcommand\ndist[2]  {\mathcal N\!\left({#1},{#2}\right)}
\newtheorem{theorem}{Theorem}
\newtheorem{corollary}{Corollary} 

\pagestyle{plain}

\newcommand{\footremember}[2]{%
    \footnote{#2}
    \newcounter{#1}
    \setcounter{#1}{\value{footnote}}%
}
\newcommand{\footrecall}[1]{%
    \footnotemark[\value{#1}]%
}

\begin{document}

\title{Granger Causality Maps for Langevin Systems}

\author{Lionel Barnett\footremember{SCCS}{Sussex Centre for Consciousness Science, Department of Informatics, University of Sussex, Falmer, Brighton, UK}\footremember{corrauth}{Corresponding author: \texttt{\href{mailto:l.c.barnett@sussex.ac.uk}{l.c.barnett@sussex.ac.uk}}}, \ Benjamin Wahl\footremember{iresearcher}{Unaffiliated\vspace{4pt}}, \ Nadine Spychala\footrecall{SCCS} \ and Anil K. Seth\footrecall{SCCS}}

\date\today

\maketitle

\begin{abstract}
\citet{WahlEtal:2016,WahlEtal:2017} introduced the idea of Granger causality (GC) maps for Langevin systems: dynamics are localised linearly at each point in phase space as vector Ornstein-Uhlenbeck (VOU) processes, for which GCs may in principle be calculated, thus constructing a GC map on phase space. Their implementation, however, suffered a significant drawback: GCs were approximated from models based on discrete-time stroboscopic sampling of local VOU processes, which is not only computationally inefficient, but more seriously, infeasible on regions of phase space where local dynamics are unstable, leaving ``holes'' in the GC maps. We solve these problems by deriving an analytical expression for GC rates associated with a VOU process which, under quite general conditions, yields a meaningful solution even in the unstable case. Applied to GC maps, this not only ``fills in the holes'', but also furnishes a computationally efficient method of calculation devolving to solution of continuous-time algebraic Riccati equations which, in the case of a univariate source, become simple quadratic equations.  We show, furthermore, that the GC rate for VOU processes is invariant under rescaling of the overall fluctuations intensity, so that GC maps may effectively be calculated for \emph{deterministic} nonlinear dynamical systems, with a residual ``ghost of noise'' represented by a variance-covariance map.
\end{abstract}

\noindent \textbf{Keywords:}
Langevin systems,
nonlinear Granger causality,
transfer entropy,
Ornstein-Uhlenbeck processes

\section{Introduction\label{sec:intro}}

Wiener-Granger causality (henceforth GC), a widely-used method for quantifying directed information transfer between stochastic variables, is based on the premise that cause (a) precedes effect, and (b) contains unique information about effect \citep{Wiener:1956,Granger:1963}. While physics, and indeed other branches of science, are traditionally concerned with ``mechanism'', in the sense of the structure and parameters of models, the appeal of information theory is that it abstracts away mechanism in favour of causal (in the Wiener-Granger sense) relationships among system variables. As such, it has been applied in fields as diverse as econometrics, the neurosciences, genomics, ecology and climate science. Most commonly operationalised via linear modelling \citep{Geweke:1982,Geweke:1984}, it is widely (if sometimes unfairly\footnote{It is underappreciated that even if a (stationary) stochastic process has a nonlinear generative mechanism, it may nonetheless be amenable to linear modelling. Specifically, under mild conditions it will have a Wold representation as a linear moving-average which may, if minimum-phase conditions hold, be suitable for Granger-causal analysis.}) viewed as inappropriate for stochastic systems featuring nonlinear interactions -- in contrast to its nonparametric cousin transfer entropy \citep[TE;][]{Schreiber:2000,Palus:2001,BarnettSeth:2009}. Transfer entropy is, however, in general analytically intractable for nonlinear systems, especially in continuous time \citep{SpinneyEtal:2017}, and frequently problematic to estimate empirically \citep{ShahsavariBaboukaniEtal:2020}.

Of especial interest are \emph{diffusion processes} described by Langevin equations---equivalently (in general nonlinear) stochastic differential equations (SDEs) or Fokker-Planck equations---which are ubiquitous in statistical physics and have further applications in biology, econometrics, machine learning and beyond. \citet{WahlEtal:2016,WahlEtal:2017} present a powerful approach to analysis of information transfer in autonomous Langevin systems. Following standard practice when faced with analytically intractable nonlinearity, they linearise locally in the system phase space. The locally linearised dynamics of a Langevin process may, under mild assumptions, be represented by a vector Ornstein-Uhlenbeck (VOU) process. Noting that stroboscopic (regularly spaced) observation of a VOU process yields a 1st-order vector autoregressive (VAR) process, now in discrete time, \citet{WahlEtal:2016} calculate Granger causalities between a given pair of sub-processes from local VAR(1) models obtained from suitably fine-grained subsampling, thus constructing a mapping from phase space to local GC values for those sub-processes. This mapping may, further, be averaged over the stationary distribution of the process in phase space, yielding a global, system-wide GC value for the sub-processes in question.

This procedure, however, has two drawbacks. Firstly, we note that as the sampling interval shrinks, the GC value approaches zero \citep{FlorensFougere:1996}, while the GC \emph{rate}---information transfer per unit time---approaches a finite limit \citep{BarnettSeth:ctgc:2017}. Since the sample time interval in the subsampling procedure is finite, the procedure can thus yield at best approximations to local GCs rates. Secondly, and more problematically, it is commonplace in nonlinear dynamics that, even if globally stable, dynamics may well be locally unstable in large regions of the phase space (indeed, local instability is a fundamental characteristic of chaotic dynamics). Locally linearised dynamics are determined by the Jacobian: the system is locally stable at a point in phase space iff the Jacobian matrix evaluated at that point is Hurwitz-stable; that is, all its eigenvalues lie strictly in the left half-plane in the complex plane. The problem, then, is that available techniques for deriving GCs from local discrete-time VAR(1) models, \eg, via spectral factorisation \citep{Wilson:1972,Dhamala:2008a}, fail in the unstable case, leaving ``holes'' in phase space where the GC map is undefined. A particular consequence of these holes, is that it becomes impossible to calculate meaningful system-wide GC values. \citet{WahlEtal:2016} propose a scheme whereby local GCs are averaged only over stable regions in phase space; however the resulting average fails to reflect the contribution to global information transfer of unstable regions.

In this article we resolve both issues, firstly by demonstrating that GC rates for a VOU process\footnote{I.e., global GC rates; as VOU processes are linear, there is nothing to gain from local linearisation -- see Section~\ref{sec:lvougc}.} may be calculated directly from the VOU model parameters without recourse to subsampling, and that, furthermore, under broad conditions results retain an interpretion as local information transfer even when the local VOU model is unstable. Our resolution leans heavily on previous work by the authors: an analytic formulation of GC rates for a class of continuous-time, distributed-lag  stochastic processes \citep{BarnettSeth:ctgc:2017}, and calculation of Granger causalities for discrete-time state-space systems \citep{BarnettSeth:ssgc:2015}.

We also demonstrate a seldom-remarked invariance of GC with respect to rescaling of the noise intensity. This has interesting consequences for GC maps, namely that it permits an extension of information transfer analysis to \emph{deterministic} nonlinear dynamical systems described by sets of ordinary differential equations (ODEs), by globally dialling the noise down to zero, while retaining its ``ghost'' in the form of a noise variance-covariance map on the phase space, which we interpret as specifying notional infinitesimal fluctuations.

In summary, the main contributions of this paper are: (i) calculation of GC rates for VOU processes directly from model parameters, including in the unstable case, allowing (ii) construction of GC maps for Langevin systems over the entire phase space, and  (iii) the extension of GC maps and global GC rates to classical deterministic dynamics.

\section{Granger causality rate for vector Ornstein-Uhlenbeck processes} \label{sec:vougc}

\citet{BarnettSeth:ctgc:2017} define the \emph{zero-horizon Granger causality rate} (henceforth GC rate) for a class of distributed-lag vector stochastic processes in continuous time. Although the analysis there only addresses the unconditional case, all results extend straightforwardly to the conditional case. We start with a brief recap of the construction.

We consider the class of stationary, zero-mean, continuous-time vector moving-average (VMA) processes of the form
\begin{equation}
	\by(t) = \int_{u = 0}^\infty B(u) \,d\bw(t-u) \,,\qquad -\infty < t < \infty\,. \label{eq:ctma}
\end{equation}
Here $\bw(t)$ is an $n$-dimensional Wiener process with $d\bw(t) \sim \ndist 0{\Sigma\,dt}$, where $\Sigma$ is an $n \times n$ positive-definite covariance matrix, and $B(u)$ an $n \times n$ square-integrable moving-average kernel, with $B(0) = I$. The integral in \eqref{eq:ctma} is to be interpreted as an It\=o integral \citep{Oksendal:2003}. It is assumed that the VMA form \eqref{eq:ctma} may be inverted to yield a vector autoregressive (VAR) form as a stochastic integro-differential equation
\begin{equation}
	d\by(t) = \left[\int_{u = 0}^\infty A(u)\by(t-u)\,du\right]\!dt + d\bw(t) \,, -\infty < t < \infty \label{eq:ctar}
\end{equation}
with square-integrable autoregressive kernel $A(u)$. \citet{BarnettSeth:ctgc:2017} consider only \emph{stable} and \emph{minimum-phase} models, and it is further assumed that any \emph{sub-}process of $\by(t)$ also has an invertible VMA representation. We refer to such processes as continuous-time vector autoregressive (CTVAR) processes.

Given a CTVAR process $\by(t)$ as above, \citet{BarnettSeth:ctgc:2017} consider the optimal least-squares predictor $\hat\by(t;h) = \exop[\by(t+h) \vbar \by(u) : u \le t]$ at finite prediction horizon $h > 0$, and show that the covariance matrix of the (stationary) prediction error process $\be(t;h) = \hat\by(t;h)-\by(t+h)$ may be expressed as
\begin{equation}
	\cE(h) = \exop\big[\be(t;h)\be(t;h)^\trop\big] = \int_{u=0}^h B(u) \Sigma B(u)^\trop \,du\,, \label{eq:cE}
\end{equation}
where superscript $\trop$ denotes matrix transpose. Suppose now that $\by$ is partitioned as\footnote{Throughout, subscripts $1,2,3$ on vectors and matrices are multi-indices corresponding to the given partitioning.} $\by = [\by_1^\trop\; \by_2^\trop\; \by_3^\trop]^\trop$. For GC analysis, $\by_1$ will be the \emph{target} variable, $\by_2$ the \emph{conditioning} variable and $\by_3$ the \emph{source} variable. Throughout, we use subscript $\rropb$ to denote the multi-index pair $(1,2)$, and superscript $\rropb$ to denote quantities associated with the ``reduced system'' $\by_\rrop = [\by_1^\trop\; \by_2^\trop]^\trop$; \ie, with the source variable $\by_3$ omitted. Analogous to the discrete-time case \citep{Geweke:1982,Geweke:1984}, the Granger causality from $\by_3$ to $\by_1$ conditional on $\by_2$ at prediction horizon $h$ is defined as
\begin{equation}
	\cF_{\by_3 \to \by_1 | \by_2}(h) = \log\frac{|\cE^\rrop_{11}(h)|}{|\cE_{11}(h)|}, \label{eq:GCh}
\end{equation}
where $|\cdots|$ denotes matrix determinant. It is shown that $\cF_{\by_3 \to \by_1 | \by_2}(0) = 0$, and the GC rate is defined as
\begin{equation}
	\cR_{\by_3 \to \by_1 | \by_2} = \dot\cF_{\by_3 \to \by_1 | \by_2}(0) = \lim_{h \to 0}\, \frac1h \cF_{\by_3 \to \by_1 | \by_2}(h), \label{eq:GCrdef}
\end{equation}
where the dot denotes time differentiation. Finally, it is shown that
\begin{equation}
	\cR_{\by_3 \to \by_1 | \by_2} = \traceop\!\Big[\Sigma^{-1}_{11}\big(D^\rrop_{11} - D_{11}\big)\Big], \label{eq:gcrate}
\end{equation}
where
\begin{equation}
	D = \tfrac12 \ddot\cE(0) = \tfrac12\big[\dot B(0)\Sigma + \Sigma \dot B(0)^\trop  \big] \label{eq:ddef}
\end{equation}
and $D^\rrop$ is the corresponding quantity for the reduced process.

Vector Ornstein-Uhlenbeck processes, as defined by a linear SDE of the form:
\begin{equation}
	d\by(t) = A \by(t) \,dt + d\bw(t) \label{eq:vou}
\end{equation}
are a special case of CTVAR processes with autoregressive kernel $A(u) = A\delta(u)$, where $A$ is an $n \times n$ matrix. Stability requires that $|Iz-A| \ne 0$ for $z$ in the right half-plane $\re(z) \ge 0$ of the complex plane---\ie, $A$ is Hurwitz-stable---and the process is always minimum-phase. A sub-process of a stable VOU process, while not in general itself a VOU process, will have (stable, minimum-phase) VMA and VAR representations of the form \eqref{eq:ctma} and \eqref{eq:ctar} respectively. \citet[Appendix~F]{BarnettSeth:ctgc:2017} show that in the general case $\dot B(u) = \int_0^u A(s) B(u-s) \,ds$, $u > 0$, from which we derive $B(u) = e^{Au}$ for the VOU process \eqref{eq:vou}, leading to $\dot B(0) = A$ and
\begin{equation}
	D = \tfrac12 \big(A\Sigma + \Sigma A^\trop\big). \label{eq:D1}
\end{equation}

Relaxing the stability requirement (\ie, $A$ may have eigenvalues in the right complex half-plane), the process $\by(t)$ may no longer be assumed stationary, and consequently may not, as in \eqref{eq:ctma}, be taken to extend into the infinite past. Thus, rather than \eqref{eq:ctma}, we consider CTVAR processes of the form
\begin{equation}
	\by(t) = \int_{u = 0}^t B(u) \,d\bw(t-u) \,, \qquad t \ge 0\,, \label{eq:ctma0}
\end{equation}
initialised at $t = 0$, where $B(u)$ is no longer assumed square-integrable. In Appendix~\ref{apx:fpestat} we show that \eqref{eq:ctma0} may always be inverted to yield a (not necessarily stable) continuous-time VAR representation, and although the process $\by(t)$ itself may not be stationary, the finite-horizon prediction error process $\be(t;h)$ is nonetheless (wide-sense) stationary with covariance matrix $\cE(h)$ as in \eqref{eq:cE}. The construction of $\cR_{\by_3 \to \by_1 | \by_2}$ outlined above thus goes through unchanged, and---noting that $\by_\rrop(t)$ will not in general be a VOU process---it remains to calculate $\dot B^\rrop(0)$, and thence $D^\rrop$. Our principal result shows how $\cR_{\by_3 \to \by_1 | \by_2}$ may be calculated explicitly from the  VOU parameters $(A,\Sigma)$ under relaxed assumptions on the stability of $A$ (we shall still require $\Sigma$ to be positive-definite).\\

\begin{theorem} \label{the:main}
For the VOU process \eqref{eq:vou} with $d\bw(t) \sim \ndist 0{\Sigma\,dt}$ and a partitioning $\by = [\by_1^\trop\; \by_2^\trop\; \by_3^\trop]^\trop$, if $\Sigma$ is positive-definite and the matrix pair $(A_{33},\, A_{\rrop3})$ detectable, then the continuous-time algebraic Riccati equation (CARE)
\begin{equation}
	A_{33}P_{33} + P_{33} A_{33}^\trop + \Sigma_{33} = \big(P_{33} A_{\rrop3}^\trop + \Sigma_{3\rrop}\big) \Sigma_{\rrop\rrop}^{-1} \big(P_{33} A_{\rrop3}^\trop + \Sigma_{3\rrop}\big)^\trop \label{eq:scare}
\end{equation}
has a unique stabilising solution $P_{33}$, and the Granger causality rate from $\by_3$ to $\by_1$ conditional on $\by_2$ for the VOU \eqref{eq:vou} is given by:
\begin{equation}
	\cR_{\by_3 \to \by_1|\by_2} = \traceop\!\Big[\Sigma^{-1}_{11} A_{13} P_{33} A_{13}^\trop\Big], \label{eq:gcrate1}
\end{equation}
and is equal to twice the corresponding transfer entropy rate, under an appropriate definition of the latter (\cf~\citet{BarnettSeth:2009,Barnett:teml:2012}).
\end{theorem}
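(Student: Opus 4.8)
\medskip\noindent\emph{Proof sketch.}
The plan is to evaluate the general GC-rate formula \eqref{eq:gcrate}. For the full VOU system we already have $D = \tfrac12\big(A\Sigma + \Sigma A^\trop\big)$ from \eqref{eq:D1}, so everything hinges on the reduced-system quantity $D^\rrop = \tfrac12\big[\dot B^\rrop(0)\,\Sigma^\rrop + \Sigma^\rrop\,\dot B^\rrop(0)^\trop\big]$; that is, on identifying the innovations covariance $\Sigma^\rrop$ and the moving-average kernel derivative $\dot B^\rrop(0)$ of the sub-process $\by_\rrop = [\by_1^\trop\; \by_2^\trop]^\trop$.

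First I would derive an explicit state-space (innovations) representation of $\by_\rrop$. Block-wise, \eqref{eq:vou} reads $d\by_\rrop = (A_{\rrop\rrop}\by_\rrop + A_{\rrop3}\by_3)\,dt + d\bw_\rrop$ and $d\by_3 = (A_{3\rrop}\by_\rrop + A_{33}\by_3)\,dt + d\bw_3$, exhibiting $\by_\rrop$ as the output of a linear system with hidden state $\by_3$ and mutually correlated process/observation noise of joint covariance $\begin{pmatrix}\Sigma_{33}&\Sigma_{3\rrop}\\ \Sigma_{\rrop3}&\Sigma_{\rrop\rrop}\end{pmatrix}$. The continuous-time Kalman--Bucy filter then yields the innovations form
\begin{equation}
\begin{aligned}
 d\by_\rrop(t) &= A_{\rrop\rrop}\by_\rrop(t)\,dt + A_{\rrop3}\hat\by_3(t)\,dt + d\bw^\rrop(t)\,,\\
 d\hat\by_3(t) &= A_{3\rrop}\by_\rrop(t)\,dt + A_{33}\hat\by_3(t)\,dt + K\,d\bw^\rrop(t)\,,
\end{aligned}
\end{equation}
where $\hat\by_3(t) = \exop[\by_3(t)\vbar\by_\rrop(s):s\le t]$, the innovations covariance is $\Sigma^\rrop = \Sigma_{\rrop\rrop}$ (the noise correlation entering only at higher order in $dt$), the Kalman gain is $K = (P_{33}A_{\rrop3}^\trop + \Sigma_{3\rrop})\Sigma_{\rrop\rrop}^{-1}$, and the steady-state error covariance $P_{33} = \exop\big[(\by_3-\hat\by_3)(\by_3-\hat\by_3)^\trop\big]$ is precisely the solution of the CARE \eqref{eq:scare} (the stationary Kalman--Bucy Riccati equation with correlated noise). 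A unique stabilising, positive-semidefinite $P_{33}$ then exists by standard CARE theory: positive-definiteness of $\Sigma$ forces the noise block, hence also its Schur complement $\Sigma_{33} - \Sigma_{3\rrop}\Sigma_{\rrop\rrop}^{-1}\Sigma_{\rrop3}$, to be positive-definite (so the relevant controllability condition is automatic), while detectability of $(A_{33},A_{\rrop3})$ -- which is invariant under the output-injection shift $A_{33}\mapsto A_{33}-\Sigma_{3\rrop}\Sigma_{\rrop\rrop}^{-1}A_{\rrop3}$ that decorrelates the noises -- supplies the remaining ingredient. Crucially, this argument uses nothing about Hurwitz-stability of $A$: when $A$ is unstable, $\by_\rrop$ must be taken in the initialised form \eqref{eq:ctma0}, but its prediction-error process is still wide-sense stationary, and (as in Appendix~\ref{apx:fpestat}) the filter identities and the kernel computed below are exactly what that construction supports.

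Next I would read $\by_\rrop$ off the joint state $[\by_\rrop^\trop\; \hat\by_3^\trop]^\trop$, whose system matrix is $A$ itself under the coarse $(\rropb,3)$ partition; this gives the MA kernel $B^\rrop(u) = \begin{pmatrix}I & 0\end{pmatrix}e^{Au}\begin{pmatrix}I\\ K\end{pmatrix}$, so that $B^\rrop(0) = I$ and $\dot B^\rrop(0) = A_{\rrop\rrop} + A_{\rrop3}K$. Substituting into $D^\rrop = \tfrac12\big[\dot B^\rrop(0)\Sigma_{\rrop\rrop} + \Sigma_{\rrop\rrop}\dot B^\rrop(0)^\trop\big]$ and simplifying with $A_{\rrop3}K\Sigma_{\rrop\rrop} = A_{\rrop3}P_{33}A_{\rrop3}^\trop + A_{\rrop3}\Sigma_{3\rrop}$, the noise-correlation cross-terms cancel against those in the $(\rropb,\rropb)$-block of \eqref{eq:D1}, leaving $D^\rrop - D_{\rrop\rrop} = A_{\rrop3}P_{33}A_{\rrop3}^\trop$. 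Its $(1,1)$-block is $A_{13}P_{33}A_{13}^\trop$ since $A_{\rrop3} = [A_{13}^\trop\; A_{23}^\trop]^\trop$, and inserting this into \eqref{eq:gcrate} along with $\Sigma^\rrop_{11} = \Sigma_{11}$ gives \eqref{eq:gcrate1}.

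Finally, for the transfer-entropy statement I would invoke the known identity that, for jointly Gaussian, conditionally-linear processes, the conditional transfer entropy at a finite prediction horizon $h$ equals one half the conditional Granger causality $\cF_{\by_3\to\by_1|\by_2}(h)$ \citep{BarnettSeth:2009,Barnett:teml:2012}; dividing by $h$ and letting $h\to 0$ then carries this factor of two across to the rates, the ``appropriate definition'' of the TE rate being $\lim_{h\to0}\tfrac1h\,T_{\by_3\to\by_1|\by_2}(h)$. I expect the main obstacle to lie in the second paragraph -- establishing rigorously that the Kalman--Bucy innovations form, the identification $\Sigma^\rrop = \Sigma_{\rrop\rrop}$, and the steady-state CARE for $P_{33}$ all genuinely hold for the possibly non-stationary, initialised sub-process \eqref{eq:ctma0} (i.e.\ that the formal manipulations above are underwritten by the construction of Appendix~\ref{apx:fpestat}), and in pinning down precisely which detectability and positivity hypotheses guarantee a unique stabilising CARE solution when process and observation noise are correlated.
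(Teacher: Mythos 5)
Your proposal is correct and lands on every one of the paper's key identities---the CARE \eqref{eq:scare}, the gain $\Kay^\rrop = (P_{33}A_{\rrop3}^\trop + \Sigma_{3\rrop})\Sigma_{\rrop\rrop}^{-1}$, the innovations covariance $\Sigma^\rrop = \Sigma_{\rrop\rrop}$, the kernel $B^\rrop(u) = [I\;\,0]\,e^{Au}\,[I\;\,\Kay^{\rrop\trop}]^\trop$ with $\dot B^\rrop(0) = A_{\rrop\rrop} + A_{\rrop3}\Kay^\rrop$, the cancellation giving $D^\rrop - D_{\rrop\rrop} = A_{\rrop3}P_{33}A_{\rrop3}^\trop$, and the factor of two for the Gaussian transfer entropy---but it gets there by a genuinely different route. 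You invoke the continuous-time Kalman--Bucy filter with correlated process/observation noise directly; the paper instead discretises the VOU stroboscopically with period $\dt$, applies the \emph{discrete-time} state-space Granger causality machinery of \citet{BarnettSeth:ssgc:2015} (with the reduction of the full DARE to the lower-dimensional DARE in $\oP_{33}$ taken from \citet{GutknechtBarnett:2023}), and only then passes to the limit $\dt \to 0$ to recover the CARE, the gain and the exponential form of $B^\rrop(u)$. The trade-off is exactly the one you flag in your final paragraph: the direct Kalman--Bucy derivation is shorter and more transparent, but its validity for a possibly unstable, non-stationary process initialised at $t=0$ is precisely the delicate point, and the paper's discretise-then-limit strategy, together with the initialised-VMA inversion and wide-sense stationarity of the prediction error established in Appendix~\ref{apx:fpestat}, is how the authors close that gap without ever appealing to a stationary continuous-time filter. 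Your handling of existence and uniqueness of the stabilising CARE solution (Schur complement of $\Sigma$ positive-definite giving the stabilisability condition for free, detectability of $(A_{33},A_{\rrop3})$ preserved under the output-injection shift that decorrelates the noises) matches the paper's conditions S1--S2, which it simply cites from the Riccati literature. So: correct, same skeleton, but your continuous-time shortcut would need the discretisation (or an equivalent rigorous treatment of the initialised filter) spelled out to be watertight in the unstable case.
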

\begin{proof}
See Appendix~\ref{apx:vougcproof}.
\end{proof}

\noindent We also state an unconditional counterpart to Theorem~\ref{the:main}:\\

\begin{corollary}
Given the partitioning $\by = [\by_1^\trop\; \by_2^\trop\; \by_3^\trop]^\trop$, then assuming the associated CAREs have unique stabilising solutions, the unconditional Granger causality rate from $\by_2$ to $\by_1$ for the VOU \eqref{eq:vou} is given by:
\begin{equation}
	\cR_{\by_2 \to \by_1} = \cR_{\by_{23} \to \by_1} - \cR_{\by_3 \to \by_1|\by_2} \label{eq:gcrun}
\end{equation}
where $\by_{23} = [\by_2^\trop\; \by_3^\trop]^\trop$.
\end{corollary}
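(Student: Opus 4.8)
The plan is to derive \eqref{eq:gcrun} by differentiating, at $h=0$, a telescoping identity among the corresponding finite-horizon Granger causalities. First I would observe that, with the partition $\by = [\by_1^\trop\;\by_2^\trop\;\by_3^\trop]^\trop$ fixed, each rate in \eqref{eq:gcrun} is the zero-horizon limit of a log-ratio of determinants of $\by_1$-prediction-error covariances across a nested family of subprocesses of $\by$. Writing $\cE_{11}(h)$ for the $(1,1)$-block of the prediction-error covariance \eqref{eq:cE} of the full process $\by$, $\cE^\rrop_{11}(h)$ for the $(1,1)$-block of that of the reduced process $\by_\rrop = [\by_1^\trop\;\by_2^\trop]^\trop$, and $\cE^{\{1\}}(h)$ for the prediction-error covariance of $\by_1$ considered on its own --- which does not depend on whether $\by_1$ is viewed as a subprocess of $\by$ or of $\by_\rrop$ --- the definition \eqref{eq:GCh} together with its unconditional specialisation gives, for $h>0$,
\[
	\cF_{\by_3\to\by_1|\by_2}(h) = \log\frac{|\cE^\rrop_{11}(h)|}{|\cE_{11}(h)|}, \qquad \cF_{\by_2\to\by_1}(h) = \log\frac{|\cE^{\{1\}}(h)|}{|\cE^\rrop_{11}(h)|}, \qquad \cF_{\by_{23}\to\by_1}(h) = \log\frac{|\cE^{\{1\}}(h)|}{|\cE_{11}(h)|}\,.
\]
Adding the first two identities telescopes to $\cF_{\by_{23}\to\by_1}(h) = \cF_{\by_2\to\by_1}(h) + \cF_{\by_3\to\by_1|\by_2}(h)$ for all $h>0$ --- the continuous-time, finite-horizon counterpart of Geweke's additive decomposition of Granger causality.

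Next I would pass to the GC rates. Since $\cF(0)=0$ in each case and $\cR = \dot\cF(0)$ by \eqref{eq:GCrdef}, differentiating the telescoped identity at $h=0$ yields $\cR_{\by_{23}\to\by_1} = \cR_{\by_2\to\by_1} + \cR_{\by_3\to\by_1|\by_2}$, which is \eqref{eq:gcrun} after rearrangement. The CARE hypotheses enter only to guarantee that these derivatives exist and are finite: Theorem~\ref{the:main} applied to the given partition handles $\cR_{\by_3\to\by_1|\by_2}$, and the same theorem applied to the partition with target $\by_1$, empty conditioning variable and source $\by_{23}$ handles $\cR_{\by_{23}\to\by_1}$, in each case under the stated hypothesis that the associated CARE has a unique stabilising solution; the identity then forces $\cR_{\by_2\to\by_1}$ to be finite and equal to their difference. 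This is the substantive point, since $\by_\rrop$ is in general not a VOU process and so $\cR_{\by_2\to\by_1}$ is not otherwise accessible through Theorem~\ref{the:main}.

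The step I expect to be the main obstacle is checking that the argument survives the relaxation of stability, where none of $\by$, $\by_\rrop$, $\by_1$ need be stationary. Here I would lean on the construction of Section~\ref{sec:vougc} and Appendix~\ref{apx:fpestat}: even when $A$ has eigenvalues in the right half-plane, each of these (sub)processes still admits a representation of the form \eqref{eq:ctma0} whose finite-horizon prediction-error process is wide-sense stationary with covariance \eqref{eq:cE}, so the determinantal formulas above, and their differentiability at $h=0$, hold exactly as in the stable CTVAR setting of \citet{BarnettSeth:ctgc:2017}. Once that is established the telescoping step is purely algebraic and requires no further VOU-specific input.
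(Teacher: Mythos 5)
Your proposal is correct and takes essentially the same route as the paper: the paper's proof simply cites Geweke's additive decomposition $\cF_{\by_{23}\to\by_1} = \cF_{\by_2\to\by_1} + \cF_{\by_3\to\by_1|\by_2}$ and its survival to the continuous-time limit, which is exactly the telescoping of log-determinant ratios of prediction-error covariances that you write out explicitly at finite horizon $h$ before differentiating at $h=0$. Your additional care over wide-sense stationarity of the prediction errors in the unstable case is handled by Appendix~\ref{apx:fpestat} just as you indicate, so nothing further is needed.
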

\begin{proof}
This follows from the corresponding standard result in discrete time \citep[Sec.~3]{Geweke:1984}, which survives passage to the continuous-time limit \cite[Sec.~3.3,~eq.~59]{BarnettSeth:ctgc:2017}.
\end{proof}
\noindent We note the following:
\begin{enumerate}
\item Under the weaker condition that $\Sigma$ is positive-semidefinite and $\Sigma_{\rrop\rrop}$ invertible, the CARE \eqref{eq:scare} has a unique stabilising solution under the additional condition that $(A_{33}-\Sigma_{3\rrop}\Sigma_{\rrop\rrop}^{-1}A_{\rrop3},\, \Sigma_{33}-\Sigma_{\rrop3}\Sigma_{\rrop\rrop}^{-1}\Sigma_{3\rrop})$ is stabilisable; see Appendix~\ref{apx:vougcproof}. It is not entirely clear however, how to interpret the resulting GC rate $\cR_{\by_3 \to \by_1|\by_2}$, so we do not consider this case further.
\item If either or both of $A$, $A_{33}$ are Hurwitz-stable, then the detectability condition of Theorem~\ref{the:main} holds. The case $A_{33}$ stable is trivial; the case $A$ stable is proved in Appendix~\ref{apx:PBH}.
\item Of the VAR coefficients, only $A_{13}, A_{23}, A_{33}$---those with $\by_3$ as source---appear in the solution (by contrast, all the $\Sigma_{ij}$ potentially affect the result). In particular, $\cR_{\by_3 \to \by_1|\by_2}$ vanishes as expected\footnote{In the particular case that $A_{\rrop3}$ (and thus $A_{13}$) are identically zero and $A_{33}$ is not Hurwitz-stable, while the detectability condition of Theorem~\ref{the:main} is violated it still makes sense to declare $\cR_{\by_3 \to \by_1|\by_2} = 0$, since the dynamics of $\by_\rrop$ are completely unaffected by the source $\by_3$.} if $A_{13} = 0$.
\item If $\dim(\by_3) = 1$, \ie, the source variable is $1$-dimensional, then $P_{33}$ is scalar and the CARE \eqref{eq:scare} becomes the quadratic equation in $P_{33}$:
\begin{equation}
	\big(A_{\rrop3}^\trop \Sigma_{\rrop\rrop}^{-1} A_{\rrop3}\big) P_{33}^2 - 2 \big( A_{33} - \Sigma_{3\rrop} \Sigma_{\rrop\rrop}^{-1} A_{\rrop3}\big) P_{33} - \big(\Sigma_{33} - \Sigma_{3\rrop}\Sigma_{\rrop\rrop}^{-1}\Sigma_{\rrop3}\big) = 0\,.  \label{eq:qcare}
\end{equation}
The unique stabilising solution for $P_{33}$ (if it exists) corresponds to the solution of \eqref{eq:qcare} with the positive square root of the discriminant\footnote{More precisely, there are two scenarios: (i) the vector $A_{\rrop3}$ is identically zero, in which case we require $A_{33} < 0$ (the Granger causality rate is then zero); or (ii) $A_{\rrop3}$ is not identically zero, in which case $A_{\rrop3}^\trop \Sigma_{\rrop\rrop}^{-1} A_{\rrop3} > 0$, and the discriminant of the quadratic equation \eqref{eq:qcare} is nonnegative, as is the positive square root solution.}.
\item\label{it:rescale} If we \emph{rescale} the residuals covariance matrix by $\Sigma \to \nu\Sigma$, then $\nu P_{33}$ is the solution of the corresponding CARE \eqref{eq:scare}, and from \eqref{eq:gcrate1} we see that the Granger causality rate $\cR_{\by_3 \to \by_1|\by_2}$ is unchanged. Thus the Granger causality rate is invariant under rescaling of the overall intensity (but not in general under changes to relative variances/covariances) of fluctuations.
\end{enumerate}

\noindent An important special case is the Granger-causal graph \citep{Seth:2008}; \ie, the pairwise-conditional causality rates
\begin{equation}
	\cG_{ij} = \cR_{y_j \to y_i | \by_{[ij]}}\,, \qquad i,j = 1,\ldots,n, \; i \ne j \label{eq:gcgraph1}
\end{equation}
where subscript $[\cdots]$ indicates that the enclosed indices are omitted. Since the source variable is $1$-dimensional, we may apply \eqref{eq:qcare}. The sub-process indices $1,2,3$ and $\rrop$ in the previous analysis then map as $1 \to i$, $2 \to [ij]$, $3 \to j$ and $\rrop \to [j]$, and we have
\begin{equation}
	\cG_{ij} = \Sigma^{-1}_{ii} A_{ij}^2 P_{jj} \label{eq:gcgraph}
\end{equation}
with $P_{jj}$ the (positive root) solution of the quadratic equation
\begin{equation}
	\Big(A_{[j] j}^\trop \Sigma_{[j][j]}^{-1} A_{[j] j}\Big) P_{jj}^2
	- 2 \Big(A_{jj} - \Sigma_{j[j]} \Sigma_{[j][j]}^{-1} A_{[j] j}\Big) P_{jj}
	- \Big(\Sigma_{jj} - \Sigma_{j[j]} \Sigma_{[j][j]}^{-1} \Sigma_{[j] j}\Big) = 0\,. \label{eq:carecg}
\end{equation}
For the unconditional case, from \eqref{eq:gcrun} we have $\cR_{y_j \to y_i} = \cR_{y_{[i]} \to y_i} - \cG_{ij}$. For $\cR_{y_{[i]} \to y_i}$ the sub-process indices map as $1 \to i$, $2 \to \emptyset$, $3 \to [i]$ and $\rrop \to i$, and the corresponding (n-1)-dimensional CARE is
\begin{equation}
	A_{[i][i]}P_{[i][i]} + P_{[i][i]}A_{[i][i]}^\trop + \Sigma_{[i][i]} = \big(P_{[i][i]} A_{i[i]}^\trop + \Sigma_{[i]i}\big) \Sigma_{ii}^{-1} \big(P_{[i][i]} A_{i[i]}^\trop + \Sigma_{[i]i}\big)^\trop \,. \label{eq:careucg}
\end{equation}
We then have
\begin{equation}
	\cR_{y_j \to y_i} = \Sigma^{-1}_{ii}\traceop\!\Big[A_{i[i]}P_{[i][i]}A_{i[i]}^\trop\Big] - \cG_{ij}\,.
\end{equation}

\section{Granger causality maps for Langevin processes\label{sec:lvougc}}

Following \citet{WahlEtal:2016}, we consider multivariate Langevin systems specified by autonomous SDEs of the form
\begin{equation}
	d\by(t) = f\big(\by(t)\big)dt + d\bw\big(\by(t),t\big) \label{eq:nlsde}
\end{equation}
on $\reals^n$, with \emph{drift function} $f : \reals^n \to \reals^n$, and Wiener noise $d\bw(\by,t) \sim \ndist 0{\Sigma(\by)dt}$, where $\Sigma(\by)$ is the \emph{diffusion function}, which maps from $\reals^n$ to the manifold of $n \times n$ positive-definite matrices. Again following \citet{WahlEtal:2016}, we linearise \eqref{eq:nlsde} around a point $\by_0 \in \reals^n$. Setting $\bbeta(t) = \by(t)-\by_0$ we assume $\Vert\bbeta(t)\Vert < \eps$, and work to $\bigOsym(\eps)$. We have
\begin{equation}
	d\by(t) = f\big(\by_0 + \bbeta(t)\big) \,dt + d\bw\big(\by_0 + \bbeta(t),t\big)
\end{equation}
Now
\begin{equation}
	f\big(\by_0 + \bbeta(t)\big) = f(\by_0) + J(\by_0) \cdot \bbeta(t) + \bigOsym\big(\eps^2\big)
\end{equation}
where $J(\by_0) = \nabla\!f(\by_0)$ is the \emph{Jacobian matrix}\footnote{We assume all requisite derivatives of $f(\by)$ and $\Sigma(\by)$ exist.} of $f(\cdots)$ evaluated at $\by_0$, and for $\Vert\bbeta\Vert < \eps$
\begin{equation}
	d\bw(\by_0+\bbeta,t) \sim \ndist 0{\left\{\Sigma(\by_0) + \nabla\Sigma(\by_0) \cdot \bbeta + \bigOsym\big(\eps^2\big)\right\}dt}
\end{equation}
We make the further assumption that the fluctuations covariance term $\Sigma(\by)$ changes slowly with $\by$\footnote{\citet{WahlEtal:2016} refer to this assumption, common in statistical physics, as ``weakly multiplicative noise''.}; specifically, we assume $\Vert\nabla\Sigma(\by)\Vert  = \bigOsym(\eps)$ everywhere. We then have to $\bigOsym(\eps)$
\begin{equation}
	d\bbeta(t) = \left\{f(\by_0) + J(\by_0) \cdot \bbeta(t)\right\}dt + d\bw(\by_0,t)
\end{equation}
Setting
\begin{equation}
	\bz(t) = J(\by_0)^{-1} \cdot f(\by_0)+\bbeta(t) \label{eq:llvar}
\end{equation}
(a linear translation in $\reals^n$ to the mean reversion level), we find that to first order in $\eps$, $\bz(t)$ satisfies the VOU
\begin{equation}
	d\bz(t) = J(\by_0) \cdot \bz(t) \,dt + d\bw(\by_0,t) \label{eq:llvou}
\end{equation}
The analysis of the previous Section may be applied to the locally linearised process $\bz(t)$ around $\by_0$, wherever the Jacobian matrix $J(\by_0)$ is nonsingular. Note that local stability requires invertibility of $J(\by_0)$, since its eigenvalues must lie in the complex half-plane $\re(z) < 0$. Here we don't demand stability of the locally linearised VOUs everywhere, but assume that the Jacobian is singular on at most a set of measure zero in $\reals^n$.

Given a partitioning of the phase space into $\by = [\by_1^\trop\; \by_2^\trop\; \by_3^\trop]^\trop$ as in Section~\ref{sec:vougc}, the mapping $\by_0 \mapsto \cR_{\bz_3 \to \bz_1 | \bz_2}(\by_0)$ as $\by_0$ varies over the system phase space, with $\bz$ as in \eqref{eq:llvar}, defines the Granger causality map. Note that the map inherits invariance under rescaling of the covariance matrices $\Sigma(\by) \to \nu(\by)\Sigma(\by)$. Conceptually, the map might thus be regarded as associated with the \emph{deterministic} autonomous ordinary differential equation (ODE) $\dot\by(t) = f\big(\by(t)\big)$ and a given equivalence class of covariance matrix mappings $\by \mapsto \Sigma(\by)$ under rescaling.

As suggested by \cite{WahlEtal:2016}, we may obtain a \emph{global} GC rate, as the expectation of $\cR_{\bz_3 \to \bz_1 | \bz_2}(\by_0)$ over the stationary distribution (assuming it exists) of states defined by the dynamics of \eqref{eq:nlsde}. Under ergodicity assumptions, in the limit of high noise this tends to the distribution of the Wiener process $\bw\big(\by,t\big)$ appearing in \eqref{eq:nlsde}. In the limit of low noise, we might, alternatively, take the expectation over the \emph{attractor(s)} of the autonomous ODE, leaving the covariance function $\Sigma(\by)$ as a ``ghost'' of notional noise fluctuations. Note that by noise scale invariance, the only influence of noise level on the global GC rate is the distribution on phase space with respect to which we calculate the expectation. Under ergodicity assumptions, we may in practice calculate the global GC rate as
\begin{equation}
	\cR_{\by_3 \to \by_1 | \by_2} = \lim_{T \to \infty} \frac1T \int_{t=0}^T \cR_{\bz_3 \to \bz_1 | \bz_2}\big(\by(t)\big) \,dt \label{eq:gcint}
\end{equation}
where $\by(t)$ is a trajectory of \eqref{eq:nlsde}. \citet{WahlEtal:2016} show that if the process \eqref{eq:nlsde} is actually linear (\ie, it is a VOU), then the global GC rate always corresponds to the VOU Granger causality rate $\cR_{\by_3 \to \by_1|\by_2}$ of Theorem~\ref{the:main}. In the nonlinear case, we stress that, although the (nonparametric) transfer entropy rate from $\by_3$ to $\by_1$ conditioned on $\by_2$ may be well-defined \citep{SpinneyEtal:2017}, we have no reason to  expect that the global GC rate \eqref{eq:gcint} will correspond to the TE rate for any covariance mapping $\Sigma(\by)$; in other words, while the equivalence of GC with TE applies locally in phase space (as per Theorem~\ref{the:main}), it may not be assumed to hold globally for global GC defined, as described above, as the averaged local GC.\\

\subsection{Example: the Lorenz system} \label{sec:lorenz}

The well-known three-variable Lorenz system \citep{Lorenz:1963} is defined by the parametrised set of ODEs:
\begin{subequations}
\begin{align}
	\dot y_1 &= \sigma(y_2-y_1) \\
	\dot y_2 &= y_1(\rho-y_3)-y_2 \\
	\dot y_3 &= y_1y_2-\beta y_3
\end{align} \label{eq:lorenz}%
\end{subequations}
In some parameter regimes it exhibits chaotic dynamics, with the iconic ``butterfly'' strange attractor.

We posit a set of SDEs of the form \eqref{eq:nlsde} based on \eqref{eq:lorenz}, with constant fluctuations covariance matrix $\Sigma(\by) = \nu I$. Fluctuations are then globally scaled down by letting $\nu \to 0$, so that in the limit the (deterministic) dynamics are those of the ODEs \eqref{eq:lorenz}. The Jacobian of the system is
\begin{equation}
	J(\by) = \begin{bmatrix}
		-\sigma & \sigma & 0 \\
		\rho - y_3 & -1 & -y_1 \\
		y_2 & y_1 & -\beta
	\end{bmatrix}
\end{equation}
We have
\begin{equation}
	|J(\by)| = \sigma [\beta(\rho-1-y_3) - y_1(y_1+y_2)]
\end{equation}
which vanishes on the quadratic ($2$-dimensional and hence measure-zero) surface $y_3 = \rho-1-\frac1\beta y_1(y_1+y_2)$.

\begin{wrapfigure}{r}{0pt}
  \includegraphics{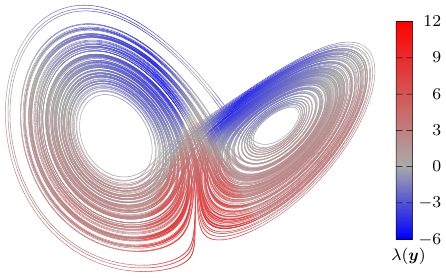}
  \caption{\small Local stability of Lorenz dynamics. The colour scale corresponds to the largest real part  $\lambda(\by)$ of the eigenvalues of the Jacobian matrix $J(\by)$; the system is locally unstable where $\lambda(\by) \ge 0$ (grey$\,\to\,$red).} \label{fig:lorenz1}
\end{wrapfigure} \ \\
We simulated the deterministic equations \eqref{eq:lorenz} with canonical parameters $\sigma = 10, \rho = 28, \beta = 8/3$, using the Runge-Kutta $(4,5)$ method implemented by the MATLAB\textsuperscript{\textregistered} function \texttt{ode45}, for $200$ seconds\footnote{The time unit is arbitrary labelled as ``seconds''.}, sampling the trajectory at time increments of $0.01$ seconds\footnote{The \texttt{ode45} solver uses an adaptive step size which may not correspond to the sampling increment $h$.}, with initial value $\by(0) = [1\;1\;1]^\trop$, and allowing $100$ seconds for the trajectory to settle into the attractor. Figure.~\ref{fig:lorenz1} displays the largest real part  $\lambda(\by)$ of the eigenvalues of $J(\by)$ plotted on  the attractor; the system is locally unstable where $\lambda(\by) \ge 0$ (grey$\,\to\,$red).
We then calculated the Granger causal graph $\cG_{ij}(\by)$ \eqref{eq:gcgraph1} on the attractor (Figure~\ref{fig:lorenz2}), estimating global GC rates according to \eqref{eq:gcint} by numerical quadrature. We may confirm from the form of $J(\by)$ that, since the $J_{ii}(\by)$ are always $< 0$, the the detectability condition of Theorem~\ref{the:main} is always satisfied. Since $J_{13}(\by) = 0$, $\cG_{13}(\by) \equiv 0$ as expected [\cf~\eqref{eq:gcgraph}], while all other pairwise-conditional GC rates are generally nonzero.
\begin{figure*}[t!]
	\begin{center}
		\includegraphics{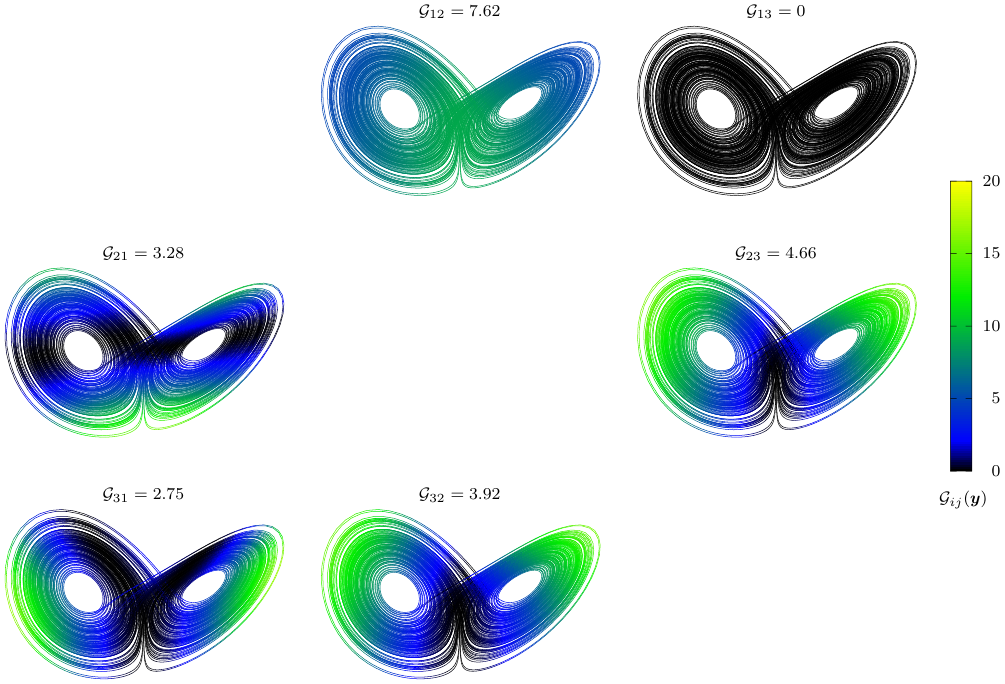}
	\end{center}
	\caption{\small GC maps of Lorenz dynamics on the attractor. The colour scale measures local Granger-causal graph values $\cG_{ij}(\by)$. The $\cG_{ij}$ values above the plots are the global Granger-causal graph values calculated according to \eqref{eq:gcint}.} \label{fig:lorenz2}
\end{figure*}

\section{Discussion}

Calculating information transfer between components of a nonlinear stochastic system, whether in the parametric sense of Wiener-Granger causality \citep{Wiener:1956,Granger:1963}, or the nonparametric sense of transfer entropy \citep{Schreiber:2000, Palus:2001}, has long been perceived as troublesome empirically and challenging, if not intractable, at the analytical level. The notion of Granger causality maps introduced by \citet{WahlEtal:2016,WahlEtal:2017} promised an elegant and powerful approach to calculation of information transfer for the important class of nonlinear stochastic dynamics that may be described by Langevin equations. The original formulation of the technique was, however, compromised by computational issues and, more seriously, by the presence of ``holes'' in the system phase space where the calculation procedure breaks down. In this paper we have presented a comprehensive resolution to both issues, so that Granger causality maps may now be considered a viable and computationally tractable method for calculation of both local and system-wide information transfer in Langevin systems. We also describe how the approach extends to calculation of information transfer for \emph{deterministic} dynamical systems described by sets of ordinary differential equations, given a template for infinitesimal fluctuations in the form of a variance-covariance matrix phase-space map.

This work opens up information flow analysis not only for standard applications of Langevin equations in statistical physics, quantum physics, noisy electrical circuits, soft-matter physics, chemical kinetics, biophysics, neural systems and econometrics, but extends the ambit of the analysis to deterministic dynamics in Hamiltonian systems, Newtonian/relativistic dynamics, classical dynamical systems and chaos theory (\cf~Section~\ref{sec:lorenz}), and also to artificial neural networks in brain modelling (\eg, neural mass models), and as increasingly deployed in machine-learning and AI applications. Taking the latter case as an illustration of potential for application, we propose that analysis of information transfer within and between layers in deep-learning, predictive coding and transformer architectures may assist in understanding the mechanisms underlying the functionality achieved by such systems---to illuminate the ``black box'', as such systems are frequently regarded---and to inform network design and AI safety issues \citep{BereskaGavves:2024}.

\ \\\noindent\textit{Limitations:} A question we have not addressed in this study is the relationship between the global GC rates obtained by averaging local GC rates (see remarks at the end of Section~\ref{sec:lvougc}) and the global transfer entropy. While GC maps may be viewed as furnishing a more fine-grained picture of information transfer between system components through localisation in phase space, we do not know whether, for a given system, the averaged local information transfer agrees with the corresponding transfer entropy. As it stands, we are only assured of a positive answer to this question in the case of linear dynamics.

A limitation for the case of neural systems, both natural and technological, is that such systems in general (certainly in the biological case) feature finite signal propagation delays in information transmission between network nodes. This may well apply too for some financial systems \citep{ComteRenault:1996}. Neither standard Langevin equations, nor indeed Ornstein-Uhlenbeck equations, accommodate finite lags \citep{BarnettSeth:ctgc:2017}. Incorporating distributed lags could in principle be achieved through generalised Langevin equations \citep{Kawasaki:1973}. While this significantly complicates the analysis, it may well be tractable to some differential-delay methods introduced in \citet{BarnettSeth:ctgc:2017}, in combination with state-space methods \citep{BarnettSeth:ssgc:2015,Solo:2016}.

\ \\\noindent\textit{Future directions:} We have not addressed deployment of GC maps for inference of information transfer from empirical time-series data. While perhaps of reduced importance in physics, which tends to proceed from a theoretical standpoint (\ie, the dynamical equations are motivated on theoretical grounds), this is of greater importance for analysis of biophysical (especially neural) and econometric systems, where there are frequently weaker theoretical motivations for detailed specification of dynamics, thus entailing the identification of an appropriate model from the data. Here we merely remark that there is a well-developed literature on methods for estimating Langevin models from discretely-sampled time-series data \citep{HindriksEtal:2011,TabarRahimi:2019,LinEtal:2025}; GC maps may then be calculated by the method described here.

A significant feature of (parametric) Granger causality is that the time-domain GC may be decomposed in the spectral domain \citep{Geweke:1982}. This decomposition is especially pertinent to functional analysis of biological neural systems, where functional (statistical) relations are frequently associated with specific frequency bands. This spectral decomposition extends, in principle, to the (zero-horizon) continuous-time case, and to VOU processes in particular. While tractable in the unconditional case \citep[Sec.~3.3, eq.~62]{BarnettSeth:ctgc:2017}, the conditional case is more mathematically intricate \citep{Geweke:1984}, and requires further work to extend to continuous time.

Finally, information transfer also forms the basis of some current theories  in the recently re-invigorated field of emergence theory in complex systems \citep{BarnettSeth:DI:2023}. Here typical systems featuring emergence, \eg, stochastic and deterministic models for flocking/swarming, are potential targets for analysis of emergence using Granger causality maps.

\subsection*{Software resources\label{sec:softrec}}
MATLAB\textsuperscript{\textregistered} code implementing calculation of Granger causality rates for VOU processes may downloaded from\\ \url{https://github.com/lcbarnett/VOUGC}.


\subsection*{Acknowledgments\label{sec:ack}}

This project has received funding from the European Research Council (ERC) under the European Union's Horizon 2020 research and innovation programme (grant agreement no. 101019254, project CONSCIOUS).


\appendix

\section*{Appendices}

\numberwithin{equation}{section}
\renewcommand{\theequation}{\thesection\arabic{equation}}

\section{Stationarity of the finite-horizon prediction error\label{apx:fpestat}}

In discrete time, consider a VMA process
\begin{equation}
	\by_k = \sum_{\ell=0}^k B_\ell \beps_{k-\ell}\,, \qquad k = 0,1,2,\ldots\,, B_0 = I, \qquad  \beps_k \sim \mathcal{N}(0,\Sigma)\, \label{eq:dtMA}
\end{equation}
with white-noise innovations process $\beps_k$. We don't assume stability, so the $B_\ell$ are not assumed square-summable, and the process $\by_k$ may not be assumed stationary; thus in contrast to \citet{BarnettSeth:ctgc:2017}, rather than an infinite past we take $\by_k$ as initialised at time step $k = 0$, with $\by_0 = \beps_0$.

We show firstly that without restrictions on the $B_\ell$, \eqref{eq:dtMA} may always be inverted to yields a (possibly unstable) VAR representation for the process $\by_k$ initialised at $k = 0$. To see this, let us set $\bY_k = \big[\by_0^\trop,\ldots,\by_k^\trop\big]^\trop$ and $\bE_k = \big[\beps_0^\trop,\ldots,\beps_k^\trop\big]^\trop$. From \eqref{eq:dtMA} we then have
\begin{equation}
	\bY_k = \bB_k \cdot \bE_k \qquad k = 0,1,2,\ldots\,,
\end{equation}
where
\begin{equation}
	\bB_k = \begin{bmatrix}
		I       & 0      & \cdots & 0 & 0     \\
		B_1     & I      & \cdots & 0 & 0      \\
		\vdots  & \vdots & \ddots & \vdots & \vdots \\
		B_{k-1} & B_{k-2}& \cdots & I & 0 \\
		B_k     & B_{k-1}& \cdots & B_1 & I
	\end{bmatrix}\,, \qquad k = 0,1,2,\ldots
\end{equation}
Since $\bB_k$ is a lower-triangular block-Toeplitz matrix with identity matrices on the diagonal, $|\bB_k| = 1$ so the matrix $\bB_k$ is invertible, and we have
\begin{equation}
	\bE_k = \bB_k^{-1} \cdot \bY_k \qquad k = 0,1,2,\ldots\,.
\end{equation}
Setting $\bA_k$ = $I-\bB_k^{-1}$, we have
\begin{equation}
	\bY_k = \bA_k \cdot \bY_k + \bE_k \qquad k = 0,1,2,\ldots\,. \label{eq:bvar}
\end{equation}
$\bA_k$ is also lower-triangular block-Toeplitz, now with zero matrices on the diagonal; it is easy to see that \eqref{eq:bvar} thus specifies a (not necessarily stable) VAR representation for the process $\by_k$, initialised at $k = 0$.

By a standard result, the optimal least-squares predictor $\hat\by_k(m) = \exop[\by_{k+m} \vbar \by_0,\ldots,\by_k]$ for $\by_{k+m}$ given $\by_0,\ldots,\by_k$ is given by \citep{Hamilton:1994}
\begin{equation}
	\hat\by_k(m) = \sum_{\ell=m}^{k+m} B_\ell \beps_{k+m-\ell} \,, \quad k = 0,1,2,\ldots\,, m = 1,2,\ldots
\end{equation}
The prediction error $\be_k(m) = \hat\by_k(m) - \by_{k+m}$ is then
\begin{equation}
	\hat\be_k(m) = -\sum_{\ell=0}^{m-1} B_\ell \beps_{k+m-\ell} \,,
\end{equation}
which, we may check, has covariance matrix
\begin{equation}
	\cE_m = \sum_{\ell=0}^{m-1} B_\ell \Sigma B_\ell^\trop \,.
\end{equation}
The crucial point to note is that this expression does not depend on the process time step $k$, so that the prediction error process is wide-sense stationary.

The above goes through in the continuous-time limit. Suppose given a continuous-time VMA process
\begin{equation}
	\by(t) = \int_{u=0}^t B(u) \,d\bw(t-u)\,, \qquad t \ge 0\,, \quad B(0) = I,  d\bw(t) \sim \mathcal{N}(0,\Sigma\,dt) \label{eq:ctMA1}
\end{equation}
where the VMA kernel $B(u)$ is no longer assumed square-integrable, initialised at $t = 0$. Suppose now given a stroboscopic sampling $\oby_k = \by(k\dt)$ with period $\dt$. Working to lowest order in $\dt$, we have
\begin{align*}
	\oby_k
	&= \int_{u=0}^{(k+1)\dt} B(u) \,d\bw(k\dt-u) \\
	&= \sum_{\ell=0}^k \int_{u=\ell \dt}^{(\ell+1)\dt} B(u) \,d\bw(k\dt-u) \\
	&\approx \sum_{\ell=0}^k B(\ell \dt) \int_{u=\ell \dt}^{(\ell+1)\dt} \,d\bw(k\dt-u) \\
	&\approx \sum_{\ell=0}^k B(\ell \dt) \,\obeps_{k-\ell}
\end{align*}
where, by the independent increments property of the Wiener process, the $\obeps_k$ are iid $\sim \mathcal{N}(0,\Sigma\,\dt)$. Thus $\oby_k$ has a VMA representation $\oby_k = \sum_{\ell=0}^k \oB_\ell \obeps_{k-\ell}$ with $\oB_k = B(k\dt)$ to lowest order in $\dt$. Taking the limit $\dt \to 0$ (\cf~\citet[Sec.~3.2]{BarnettSeth:ctgc:2017}), we see that, as in the discrete-time case, \eqref{eq:ctMA1} may always be inverted to yields a (possibly unstable) continuous-time VAR representation for the process $\by(t)$ initialised at $t = 0$, and for $h > 0$ the optimal finite-horizon least-squares prediction $\hat\by(t;h)$ of $\by(t+h)$ given $\{\by(u) : 0 \le u \le t\}$ is
\begin{equation}
	\hat\by(t;h) = \int_{u=h}^{t+h} B(u) \,d\bw(t+h-u)\,.
\end{equation}
The prediction error $\be(t;h) = \hat\by(t;h) - \by(t+h)$ is given by
\begin{equation}
	\be(t;h) = -\int_{u=0}^h B(u) \,d\bw(t+h-u)\,.
\end{equation}
By an It\=o isometry \citep{Oksendal:2003}, we find that eq.~\eqref{eq:cE} still obtains. Thus, as for the discrete-time case, $\cE(h) = \exop\big[\be(t;h)\be(t;h)^\trop\big]$ does not depend on the time stamp $t$, so that the prediction error process $\be(t;h)$ is again wide-sense stationary.

\section{Proof of Theorem~\ref{the:main}\label{apx:vougcproof}}

As in Appendix~\ref{apx:fpestat}, we sample the VOU process \eqref{eq:vou}) stroboscopically with period $\dt$. The process is not assumed stable, so that $A$ may have eigenvalues in the right half-plane of the complex plane. Using an overbar to denote quantities associated with the subsampled process (with an implied dependency on $\dt$), we find \citep[Section 3.2]{BarnettSeth:ctgc:2017} that the process $\oby_k = \by(k\dt)$ is VAR($1$)\footnote{Although \citet{BarnettSeth:ctgc:2017} consider only the stable case, the result may be shown to hold in the unstable case.}:
\begin{equation}
	\oby_k =\oA \oby_{k-1} + \obeps_k \label{eq:voux}
\end{equation}
where $\obeps_k$ is a white noise process with covariance matrix $\oSig$, and
\begin{subequations}
\begin{align}
	\oA   &= e^{A\dt} \label{eq:subvar1a} \\
	\oSig &= \Omega - e^{A\dt} \Omega e^{A^\trop \dt} \label{eq:subvar1s}
\end{align} \label{eq:subvar1}%
\end{subequations}
with $\Omega$ the solution of the continuous-time Lyapunov equation
\begin{equation}
	A \Omega + \Omega A^\trop = -\Sigma \label{eq:ctlyap}
\end{equation}
From \eqref{eq:subvar1} and \eqref{eq:ctlyap} we may calculate that to $\bigO\dt$
\begin{subequations}
\begin{align}
	\oA &= I+A\dt \label{eq:subvar1xa} \\
	\oSig &= \Sigma\label{eq:subvar1xb} \dt
\end{align} \label{eq:subvar1x}%
\end{subequations}
To calculate the reduced quantities $\Sigma^\rrop, B^\rrop(u)$ and thence $D^\rrop$, we shall work through the state-space method detailed in \citet{BarnettSeth:ssgc:2015} for the VAR($1$) process with parameters \eqref{eq:subvar1x}, and then take the limit $\dt \to 0$.

The VAR($1$) \eqref{eq:voux} is equivalent to the discrete-time innovations-form state-space (ISS) model
\begin{subequations}
\begin{align}
	\obz_{k+1} &= \oA \obz_k + \obeps_k \\
	\oby_k     &= \oA \obz_k + \obeps_k
\end{align} \label{eq:issdisc}%
\end{subequations}
with state variable $\obz_k$. The reduced state-space system---now no longer in innovations form---is then given by
\begin{subequations}
\begin{align}
	\obz_{k+1} &= \oA \obz_k + \obeps_k \\
	\oby_{\rrop,k} &= \oA_{\rrop*} \obz_k + \obeps_{\rrop,k}
\end{align} \label{eq:rssdisc}%
\end{subequations}
where subscript $*$ denotes the full indices $(1,2,3)$, and $\rrop$ the reduced indices $(1,2)$. Following \citet{BarnettSeth:ssgc:2015}, an ISS model for $\oby_{\rrop,k}$ may be derived by solution of the discrete-time algebraic Riccati equation (DARE)
\begin{equation}
	\oP = \oA\oP\oA^\trop + \oSig -  \big(\oA\oP\oA_{\rrop*}^\trop+\oSig_{*\rrop}\big) \big(\oA_{\rrop*}\oP\oA_{\rrop*}^\trop+\oSig_{\rrop\rrop}\big)^{-1} \big(\oA\oP\oA_{\rrop*}^\trop+\oSig_{*\rrop}\big)^\trop \label{eq:dare}
\end{equation}
Specifically, if a unique stabilising solution $\oP$ of \eqref{eq:dare} exists, then $\oby_{\rrop,k}$ satisfies the innovations-form SS model
\begin{subequations}
\begin{align}
	\obz_{k+1} &= \oA \obz_k + \oK^\rrop \obeps^\rrop_k \label{eq:irssdisca} \\
	\oby_{\rrop,k}  &= \oA_{\rrop*} \obz_k + \obeps^\rrop_k \label{eq:irssdiscb}
\end{align} \label{eq:irssdisc}%
\end{subequations}
where $\obeps^\rrop_k$ is a Gaussian white noise innovations process (note that this will in general \emph{not} be the same process as $\obeps_{\rrop,k}$). The covariance matrix $\oSig^\rrop$ of $\obeps^\rrop_k$ and the Kalman gain matrix $\oK^\rrop$ are given respectively by
\begin{subequations}
\begin{align}
	\oSig^\rrop &= \oA_{\rrop*}\oP\oA_{\rrop*}^\trop+\oSig_{\rrop\rrop} \\
	\oK^\rrop &= \big(\oA\oP\oA_{\rrop*}^\trop+\oSig_{*\rrop}\big) \big[\oSig^\rrop\big]^{-1}
\end{align}
\end{subequations}
\citet[Section~B]{GutknechtBarnett:2023} show, furthermore, that $\oP_{33}$ is the only non-zero block in $\oP$, and is the unique stabilising solution (if it exists) of the lower-dimensional DARE
\begin{equation}
	\oP_{33} = \oA_{33} \oP_{33} \oA_{33}^\trop + \oSig_{33} - \big(\oA_{33} \oP_{33} \oA_{\rrop3}^\trop + \oSig_{3\rrop}\big) \big(\oA_{\rrop3} \oP_{33} \oA_{\rrop3}^\trop + \oSig_{\rrop\rrop}\big)^{-1} \big(\oA_{33} \oP_{33} \oA_{\rrop3}^\trop + \oSig_{3\rrop}\big)^\trop \label{eq:sdare}
\end{equation}
The discrete-time reduced innovations covariance matrix and Kalman gain matrix then become respectively
\begin{subequations}
\begin{align}
	\oSig^\rrop &= \oA_{\rrop3} \oP_{33} \oA_{\rrop3}^\trop + \oSig_{\rrop\rrop} \\
	\oK^\rrop &=
	\begin{bmatrix}
		I \\
		\oKay^\rrop
	\end{bmatrix}
\end{align} \label{eq:dtrpssp}%
\end{subequations}
where
\begin{equation}
	\oKay^\rrop = \big(\oA_{33} \oP_{33} \oA_{\rrop3}^\trop + \oSig_{3\rrop}\big)\big[\oSig^\rrop\big]^{-1}  \label{eq:dtrpsspK}
\end{equation}

We now consider the limit $\dt \to 0$. Letting $P_{33} = \lim_{\dt \to 0} \oP_{33}$, we find that $P_{33}$ satisfies the continuous-time algebraic Riccati equation (CARE)
\begin{equation}
	A_{33}P_{33} + P_{33}A_{33}^\trop + \Sigma_{33} = \big(P_{33} A_{\rrop3}^\trop + \Sigma_{3\rrop}\big) \Sigma_{\rrop\rrop}^{-1} \big(P_{33} A_{\rrop3}^\trop + \Sigma_{3\rrop}\big)^\trop \label{eq:scarex}
\end{equation}
(we assume from here on that $\Sigma_{\rrop\rrop}$ is positive-definite, and hence invertible). Given positive-semidefinite $\Sigma$, sufficient conditions for existence of a unique stabilising solution of \eqref{eq:scarex} are \citep{ArnoldLaub:1984,NiMaoLin:2008}:
\begin{enumerate} [label={S\arabic*}]
	\item \hspace{-5pt}:\ $(A_{33},\, A_{\rrop3})$ detectable, and \label{it:carea1}
	\item \hspace{-5pt}:\ $(A_{33}-\Sigma_{3\rrop}\Sigma_{\rrop\rrop}^{-1}A_{\rrop3},\, \Sigma_{33}-\Sigma_{3\rrop}\Sigma_{\rrop\rrop}^{-1}\Sigma_{\rrop3})$ stabilisable. \label{it:carea2}
\end{enumerate}
If $\Sigma$ is positive-definite, then \ref{it:carea2} is trivially satisfied. If $A_{33}$ is stable, then \ref{it:carea1} is trivially satisfied, and in Appendix~\ref{apx:PBH} we show that \ref{it:carea1} is satisfied if $A$ is stable.  Assuming \ref{it:carea1} and \ref{it:carea2}, we have
\begin{subequations}
\begin{align}
	\Sigma^\rrop &= \lim_{\dt \to 0} \oSig_{\rrop\rrop} = \Sigma_{\rrop\rrop} \\
	K^\rrop &= \lim_{\dt \to 0} \oK^\rrop =
	\begin{bmatrix}
		I \\
		\Kay^\rrop
	\end{bmatrix} \label{eq:kayR}
\end{align}%
\end{subequations}
with
\begin{equation}
	\Kay^\rrop = \big(P_{33} A_{\rrop3}^\trop + \Sigma_{3\rrop}\big) \Sigma_{\rrop\rrop}^{-1} \label{eq:kalred}
\end{equation}

Now let $\oby_{\rrop,k} = \by_\rrop(k\dt)$ be the discretised reduced process \eqref{eq:irssdiscb}. As demonstrated in Appendix~\ref{apx:fpestat}, to lowest order in $\dt$ the VMA coefficient matrices of $\oby_{\rrop,k}$ are $\oB^\rrop_k = B^\rrop(k\dt)$, where $B^\rrop(u)$ is the VMA kernel of the continuous-time reduced process $\by_\rrop(t)$. From \eqref{eq:rssdisc}, it is easily calculated that $\oB^\rrop_0 = I$ and
\begin{equation}
	\oB^\rrop_k = \oA_{\rrop*}\oA^{k-1}\oK^\rrop\,, \qquad k > 0\,.\label{eq:ssBk}
\end{equation}
Let us fix $u > 0$. Making explicit the dependency of $\oB^\rrop_k, \oA^\rrop, \oK^\rrop$, \etc, on $\dt = u/k$, we have
\begin{align*}
	& B^\rrop(u) \\
	&= \lim_{k \to \infty}\; \oB^\rrop_k(u/k) \\
	&= \lim_{k \to \infty}\; \oA_{\rrop*}(u/k) \cdot \oA(u/k)^{k-1} \cdot \oK^\rrop(u/k) \quad\text{from \eqref{eq:ssBk}} \\
	&= \lim_{k \to \infty}\; [I+ A_{\rrop\rrop}(u/k) \ A_{\rrop3}(u/k)] \cdot [I+A(u/k)] \cdot [I+A(u/k)]^k \cdot \begin{bmatrix} I \\ \Kay^\rrop \end{bmatrix} \quad\text{from \eqref{eq:subvar1xa} and \eqref{eq:kayR}} \\
	&= [I \ 0] \cdot e^{Au} \cdot \begin{bmatrix} I \\ \Kay^\rrop \end{bmatrix} \,,
\end{align*}
where in the last step we use $\lim_{k \to \infty}\; [I+A(u/k)]^k = e^{Au}$. In particular, we have
\begin{equation}
	\dot B^\rrop(0) = [I \ 0]
	\begin{bmatrix} A_{\rrop\rrop} & A_{\rrop3} \\ A_{3\rrop} & A_{33} \end{bmatrix}
	\begin{bmatrix} I \\ \Kay^\rrop \end{bmatrix}
	= A_{\rrop\rrop} + A_{\rrop3} \Kay^\rrop \,. \label{eq:BR}
\end{equation}
Putting all this together, from \eqref{eq:ddef}, \eqref{eq:D1}, \eqref{eq:kalred} and \eqref{eq:BR}, some straightforward algebra yields
\begin{equation}
	D^\rrop-D_{\rrop\rrop} = A_{\rrop 3} P_{33} A_{\rrop 3}^\trop
\end{equation}
so that in particular
\begin{equation}
	D^\rrop_{11}-D_{11} = A_{13} P_{33} A_{13}^\trop
\end{equation}
and from \eqref{eq:gcrate} we have
\begin{equation}
	\cR_{\by_3 \to \by_1|\by_2} = \traceop\!\Big[\Sigma^{-1}_{11} A_{13} P_{33} A_{13}^\trop\Big]
\end{equation}
as stated in Theorem~\ref{the:main}.

Finally, for a (not necessarily stable) CTVAR process $\by(t)$ initialised at $t = 0$, we define the transfer entropy rate analogously to the GC rate as
\begin{equation}
	\cT_{\by_3 \to \by_1|\by_2} = \lim_{h \to 0} \,\frac1h \cT_{\by_3 \to \by_1|\by_2}(h)\,,
\end{equation}
with
\begin{equation}
	\cT_{\by_3 \to \by_1|\by_2}(h) = \enop[\by_1(t+h) \vbar \by_\rrop(u) : 0 \le u \le t] - \enop[\by_1(t+h) \vbar \by(u) : 0 \le u \le t] \label{eq:ctte}
\end{equation}
the TE at finite prediction horizon $h > 0$, where $\enop[\cdots\vbar\cdots]$ denotes conditional differential entropy. This is equivalent to the continuous-time TE rate as set out in \citet{CooperEdgar:2019}. Note now, that for multivariate-normal variables $\by,\bx$, the conditional differential entropy $\enop[\by\vbar\bx]$ is given by $\enop[\beps]$, where $\beps \sim \cN(0,\Sigma)$ is the residual error of the projection $\exop[\by\vbar\bx]$ of $\by$ on $\bx$---\ie, $\beps  = \by-\exop[\by\vbar\bx]$---and we have\footnote{Up to an additive constant that depends only on the dimension of the variable $\by$.} $\enop[\beps] = \frac12\log|\Sigma|$.  By a similar stroboscopic discretisation argument to that in Appendix~\ref{apx:fpestat}, and since increments of the Wiener noise $\bw(t)$ are multivariate-normal, the conditional entropies on the right-hand side of \eqref{eq:ctte} are seen to be $\frac12\log|\cE^\rrop_{11}(h)|$ and $\frac12\log|\cE_{11}(h)|$ respectively\footnote{This confirms that, as our notation suggests,  the expression for $\cT_{\by_3 \to \by_1|\by_2}(h)$ in \eqref{eq:ctte} does not depend on the time stamp $t$.}, with $\cE(h)$ as in \eqref{eq:cE} and $\cE^\rrop(h)$ its reduced-process counterpart. Thus from \eqref{eq:GCh} we have $\cF_{\by_3 \to \by_1 | \by_2}(h) = 2\cT_{\by_3 \to \by_1|\by_2}(h)$, and from \eqref{eq:GCrdef} $\cF_{\by_3 \to \by_1 | \by_2} = 2\cT_{\by_3 \to \by_1|\by_2}$ as stated in Theorem~\ref{the:main}.

\section{Proof of detectability for stable full matrix\label{apx:PBH}}

Given an $n \times n$ matrix
\begin{equation}
	A = \begin{bmatrix} A_{11} & A_{12} \\ A_{21} & A_{22} \end{bmatrix},
\end{equation}
we show that if $A$ is Hurwitz-stable then $(A_{11},A_{21})$ is detectable.

We use a Popov-Belevitch-Hautus (PBH) test \citep{Kailath:1980}. Suppose that $(A_{11},A_{21})$ is \emph{not} detectable. This implies that $A_{11}$ has an eigenvalue $\lambda$ with $\re(\lambda) \ge 0$ such that
\begin{equation}
	\rank\begin{bmatrix} \lambda I-A_{11} \\ A_{21} \end{bmatrix} < n_1\,,
\end{equation}
where $n_1$ is the size of $A_{11}$, so that there exists a nonzero vector $\bv$ such that
\begin{equation}
	\begin{bmatrix} \lambda I-A_{11} \\ A_{21} \end{bmatrix} \bv = 0\,,
\end{equation}
or
\begin{subequations}
\begin{align}
	A_{11}\bv &= \lambda\bv\,, \\
	A_{21}\bv &= 0\,.
\end{align} \label{eq:Av}
\end{subequations}
Now
\begin{align*}
	A\begin{bmatrix}\bv\\0\end{bmatrix}
	&= \begin{bmatrix}A_{11}\bv\\A_{21}\bv\end{bmatrix} \\
	&= \begin{bmatrix}\lambda\bv\\0\end{bmatrix} \qquad\text{by \eqref{eq:Av}} \\
	&= \lambda\begin{bmatrix}\bv\\0\end{bmatrix}
\end{align*}
so that $\lambda$ is an eigenvalue of $A$. But by assumption $\lambda$ is unstable, which contradicts Hurwitz-stability of $A$. Thus $(A_{11},A_{21})$ must be detectable.

\bibliographystyle{apalike}

\bibliography{vougc}

\end{document}